\pgfplotsset{compat=1.10}
\definecolor{Gray}{gray}{0.90}
\newtheorem{lemma}{Lemma}
\newtheorem{proposition}{Proposition}
\DeclareMathOperator*{\argmax}{argmax}
\DeclareMathOperator*{\E}{E}
\newcommand{\calU}[0]{\ensuremath{\mathcal{U}}}
\newcommand{\Real}[0]{\ensuremath{\mathbb{R}}}
\newcommand{\vb}[0]{\ensuremath{\boldsymbol{b}}}
\newcommand{\vp}[0]{\ensuremath{\boldsymbol{p}}}
\newif\ifExtendedVersion
\title{On the Economics of Ransomware}
\author{%
Aron Laszka\\
\texttt{aron.laszka@vanderbilt.edu}
\\ Vanderbilt University 
\and Sadegh Farhang\\
\texttt{farhang@ist.psu.edu}\\
Pennsylvania State University
\and Jens Grossklags\\
\texttt{jens.grossklags@in.tum.de}\\
Technical University of Munich}
\date{July, 2017\\Updated: August, 2017}
\begin{document}

\maketitle

\begin{abstract}
While recognized as a theoretical and practical concept for over 20 years, only now ransomware has taken centerstage as one of the most prevalent cybercrimes. Various reports demonstrate the enormous burden placed on companies, which have to grapple with the ongoing attack waves. At the same time, our strategic understanding of the threat and the adversarial interaction between organizations and cybercriminals perpetrating ransomware attacks is lacking.

In this paper, we develop, to the best of our knowledge, the first game-theoretic model of the ransomware ecosystem. Our model captures a multi-stage scenario involving organizations from different industry sectors facing a sophisticated ransomware attacker. We place particular emphasis on the decision of companies to invest in backup technologies as part of a contingency plan, and the economic incentives to pay a ransom if impacted by an attack. We further study to which degree comprehensive industry-wide backup investments can serve as a deterrent for ongoing attacks.

\noindent \textbf{Keywords:} {Ransomware, Backups, Security Economics, Game Theory}
\end{abstract}


\section{Introduction}
\label{sec:intro}

Already in 1996, Young and Yung coined the term \textit{cryptovirological attacks} and provided a proof-of-concept implementation of what could now be considered a major building block of ransomware malware \cite{Young96}. Due to the perceived seriousness of this attack approach, they also suggested that ``access to cryptographic tools should be well controlled.'' 

Malware featuring ransomware behavior was at first deployed at modest scale (e.g., variants of PGPCoder/GPCode between approximately 2005-2010), and often suffered from technical weaknesses, which even led a researcher in the field to proclaim that ``ransomware as a mass extortion mean is certainly doomed to failure'' \cite{gazet2010comparative}. However, later versions of GPCode already used 1024-bit RSA key encryption; a serious threat even for well-funded organizations.

Ransomware came to widespread prominence with the CryptoLocker attack in 2013, which utilized Bitcoin as a payment vehicle \cite{Liao16}. Since then, the rise of ransomware has been dramatic, culminating (so far) with the 2017 attack waves of the many variants of the WannaCrypt/WannaCry and the Petya ransomwares. Targets include all economic sectors and devices ranging from desktop computers, entire business networks, industrial facilities, and also mobile devices. Security industry as well as law enforcement estimates for the amount of money successfully extorted and the (very likely much larger) overall damage caused by ransomware attacks differ widely. However, the figures are significant (see~Section~\ref{sec:related}).
Observing these developments, in a very recent retrospective article, Young and Yung bemoan the lack of adequate response focused on ransomware attacks by all stakeholders even though the threat was known for over 20 years~\cite{Young17}.

As with any security management decision, there is a choice between doing nothing to address a threat, or selecting an appropriate investment level. In the case of responding to a sophisticated ransomware attack, this primarily concerns decisions on how to invest in backup and recovery technologies, and whether to pay a ransom in  case of a successful attack. These decisions are interrelated.

The empirical evidence is mixed (and scarce). It is probably fair to say that backup technologies have always been somewhat of a stepchild in the overall portfolio of security technologies. In 2001, a survey showed that only 41\% of the respondents did data backups and 69\% had not recently facilitated a backup; at the same time, 25\% reported to have lost data \cite{Bruskin01}. In 2009, another survey found backup usage of less than 50\%; and 66\% reported to have lost files (42\% within the last 12 months) \cite{Kabooza09}. In a backup awareness survey that has been repeated annually since 2008, the figures for individuals who \textit{never} created backups have been slowly improving. Starting at 38\% in 2008, in the most recent survey in June 2017 only 21\% reported to have never made a backup. Still, only 37\% now report to create at least monthly backups \cite{Backblaze17}, despite the heightened media attention given to ransomware.

Regarding ransom payment behavior, IBM surveyed 600 business leaders in the U.S about ransomware, and their data management practices and perceptions. Within their sample, almost 50\% of the business representatives reported ransomware attacks in their organizations. Interestingly, 70\% of these executives reported that ransom payments were made in order to attempt a resolution of the incident. About 50\% paid over \$10,000 and around 20\% reported money transfers of over \$40,000 \cite{IBM16}. In contrast, a different survey of ransomware-response practices found that 96\% of those affected (over the last 12 months) did \textit{not} pay a ransom \cite{KnowBe4}. However, the characteristics of the latter sample are not described \cite{KnowBe4}. Finally, recent cybercrime measurement studies have tracked the approximate earnings for particular ransomware campaigns (for example, by tracking related Bitcoin wallets). These studies typically do not succeed in pinpointing the percentage of affected individuals or organizations paying the ransom (e.g., \cite{kharraz2015cutting,Liao16}).

Our work targets two key aspects of a principled response to sophisticated ransomware attacks. First, we develop an economic model to further our strategic understanding of the adversarial interaction between organizations attacked by ransomware and ransomware attackers. As far as we know, our work is the first such game-theoretic model. 

Second, we study the aforementioned response approaches to diminish the economic impact of the ransomware threat on organizations. As such our model focuses on organizations' decision-making regarding backup investments (as part of an overall contingency plan), which is an understudied subject area. We further determine how backup security investments interact with an organization's willingness to pay a ransom in case of a ransomware attack. 

Further, we numerically show how (coordinated) backup investments by organizations can have a deterrent effect on ransomware attackers. Since backup investments are a private good and are not subject to technical interdependencies, this observation is novel to the security economics literature and relatively specific to ransomware. Note, for example, that in the context of cyberespionage and data breaches to exfiltrate data, such a deterrence effect of backup investments is unobservable.

We proceed as follows. In Section~\ref{sec:model}, we develop our game-theoretic model. We conduct a thorough analysis of the model in Section~\ref{sec:analysis}. In Section~\ref{sec:numerical}, we complement our analytic results with a numerical analysis. We discuss additional related work on ransomware as well as security economics in Section~\ref{sec:related}, and offer concluding remarks in Section~\ref{sec:concl}.

%
%
%
%
%

\section{Model}
\label{sec:model}

We model ransomware attacks as a multi-stage, multi-defender security game.
Table~\ref{tab:symbols} shows a list of the symbols used in our model.

\begin{table}[h!]
\centering
\caption{List of Symbols}
\label{tab:symbols}
\begin{tabular}{|c|l|}
\hline
Symbol & Description \\ 
\hline
$G_j$ & set of organizations belonging to group $j$ \\
\rowcolor{Gray} $W_j$ & initial wealth of organizations in group $j$ \\
$\beta$ & discounting factor for uncertain future losses \\
\rowcolor{Gray} $F_j$ & cost of data loss due to random failures in group $j$ \\
$L_j$ & cost of permanent data loss due to ransomware attacks in group $j$ \\
\rowcolor{Gray} $T_j$ & loss from business interruptions due to ransomware in group $j$ \\
$D$ & base difficulty of perpetrating ransomware attacks \\
\rowcolor{Gray} $C_B$ & unit cost of backup effort \\
$C_A$ & unit cost of attack effort \\
\rowcolor{Gray} $C_D$ & fixed cost of developing ransomware \\
$b_i$ & backup effort of organization $i$ \\
\rowcolor{Gray} $p_i$ & decision of organization $i$ about ransom payment \\
$a_j$ & attacker's effort against group $j$ \\
\rowcolor{Gray} $r$ & ransom demanded by the attacker \\
$V_j(a_1, a_2)$ & probability of an organization $i \in G_j$ becoming compromised \\
\hline
\end{tabular}
\end{table}

\subsection{Players}

On the defenders' side, players model organizations that are susceptible to ransomware attacks.
Based on their characteristics, we divide these organizations into two groups (e.g., hospitals and universities).
We will refer to these two groups as group $1$ and group $2$, and we let set $G_1$ and set $G_2$ denote their members, respectively.
On the attacker's side, there is a single player, who models cybercriminals that may develop and deploy ransomware.
Note that we model attackers as a single entity since our goal is to understand and improve the behavior of defenders; hence, competition between attackers is not our current focus. 
Our model---and many of our results---could be extended to multiple attackers in a straightforward manner. 

\subsection{Strategy Spaces}

With our work, we focus on the mitigation of ransomware attacks through backups (as a part of contingency plans), and we will not consider the organizations' decisions on preventative effort (e.g., firewall security policies). The tradeoff between mitigation and preventative efforts has been subject of related work \cite{grossklags2008secure}. We let~$b_i \in \Real_+$ denote the backup effort of organization $i$, which captures the frequency and coverage of backups as well as contingency plans and preparations. Compromised organizations also have to decide whether they pay the ransom or sustain permanent data loss.  We let $p_i = 1$ if organization $i$ pays, and $p_i = 0$ if it does not pay.

The attacker first decides whether it wishes to engage in cybercrime using ransomware.
If the attacker chooses to engage, then it has to select the amount of effort spent on perpetrating the attacks.
We let $a_1 \in \Real_{\geq 0}$ and $a_2 \in \Real_{\geq 0}$ denote the attacker's effort spent on attacking group $1$ and group $2$, respectively.
If the attacker chooses not to attack group $j$ (or not to engage in cybercrime at all), then $a_j = 0$.
We assume that each organization within a group falls victim to the attack with the same probability $V_j(a_1, a_2)$, which depends on the attacker's effort, independently of the other organizations.
Since the marginal utility of attack effort is typically decreasing, we assume that the infection probability $V_j(a_1, a_2)$ is
\begin{equation}
V_j(a_1, a_2) = \frac{a_j}{D + (a_1 + a_2)} ,
\end{equation}
where $D$ is the base difficulty of attacks.
In the formula above, the numerator expresses that as the attacker increases its effort on group $j$, more and more organizations fall victim.
Meanwhile, the denominator captures the decreasing marginal utility: as the attacker increases its attack effort, compromising additional targets becomes more and more difficult.
In practice, this corresponds to the increasing difficulty of finding new targets as organizations are becoming aware of a widespread ransomware attack and are taking precautions, etc.

The attacker also has to choose the amount of ransom $r$ to demand from compromised organizations in exchange for restoring their data and systems.

\subsubsection{Stages}
The game consists of two stages:
\begin{itemize}
\item Stage I: Organizations choose their backup efforts~$\vb$, while the attacker chooses its attack effort $a_1$ and $a_2$, as well as its ransom demand $r$.
\item Stage II: Each organization $i \in G_j$ becomes compromised with probability $V_j(a_1, a_2)$. Then, organizations that have fallen victim to the attack choose whether to pay the ransom or not, which is represented by $\vp$.
\end{itemize}

\subsection{Payoffs}

\subsubsection{Defender's Payoff}
If an organization $i$, which belongs to group $j \in \{1, 2\}$, has not fallen victim to a ransomware attack, then its payoff is
\begin{equation}
\calU_{O_i}\big|_{\text{not compromised}} = W_j - C_B \cdot b_i - \beta \frac{F_j}{b_i} ,
\label{eq:payoffSecureOrg}
\end{equation}
where $W_j$ is the initial wealth of organizations in group $j$, $F_j$ is their loss resulting from corrupted data due to random failures\footnote{Since we interpret effort $b_i$ primarily as the frequency of backups, the fraction $\frac{1}{b_i}$ is proportional to the expected time since the last backup. Consequently, we assume that data losses are inversely proportional to $b_i$.
Note that alternative interpretations, such as assuming $b_i$ to be the level of sophistication of backups (e.g., air-gapping), which determines the probability that the backups remain uncompromised, also imply a similar relationship.}, 
and $C_B$ is the unit cost of backup effort. 
The parameter $\beta$ is a behavioral discount factor, which captures the robust empirical observation that individuals underappreciate the future consequences of their current actions \cite{Rabin99}. The magnitude of $\beta$ is assumed to be related to underinvestment in security and privacy technologies \cite{Acquisti07,Grossklags14}; in our case, procrastination of backup investments \cite{Baddeley11}.\footnote{We are unaware of any behavioral study that specifically investigates the impact of the \textit{present bias} behavioral discount factor on backup decisions, but industry experts argue strongly for its relevance. For example, in the context of the 2017 WannaCry ransomware attacks a commentary about backups stated: ``This may be stating the obvious, but it's still amazing to know the sheer number of companies that keep procrastinating over this important task \cite{Venkat17}.''}

Otherwise, we have two cases.
If organization $i$ decides to pay the ransom $r$, then its payoff is
\begin{equation}
W_j - C_B \cdot b_i - \beta \left( \frac{F_j}{b_i} + T_j + r \right), \nonumber
\end{equation}
where $T_j$ is the loss resulting from temporary business interruption due to the attack.
On the other hand, if organization $i$ does not pay the ransom, then its payoff is
\begin{equation}
W_j - C_B \cdot b_i - \beta \left( \frac{F_j + L_j}{b_i} + T_j \right) , \nonumber
\end{equation}
where $L_j$ is the loss resulting from permanent data loss due to the ransomware attack. 
Using $p_i$, we can express a compromised organization's payoff as
\begin{equation}
\calU_{O_i}\big|_{\text{compromised}} = W_j - C_B \cdot b_i - \beta \left( \frac{F_j + (1 - p_i) \cdot L_j}{b_i} + T_j + p_i \cdot r \right).
\label{eq:payoffCompromisedOrg}
\end{equation}

By combining Equations~\eqref{eq:payoffSecureOrg} and~\eqref{eq:payoffCompromisedOrg} with $V_j$, we can express the expected utility of an organization $i \in G_j$ as
\begin{align}
\E\left[\calU_{O_i}\right] = &\left(1 - V_j(a_1, a_2)\right) \left[ W_j - C_B \cdot b_i - \beta \frac{F_j}{b_i} \right] \nonumber\\
&+ V_j(a_1, a_2) \left[ W_j - C_B \cdot b_i - \beta \left( \frac{F_j + (1 - p_i) \cdot L_j}{b_i} + T_j + p_i \cdot r \right) \right] .
\end{align}

\subsubsection{Attacker's Payoff}
For the attacker's payoff, we also have two cases.
If the attacker decides not to participate (i.e., if $a_1 = 0$ and $a_2 = 0$), then its payoff is simply zero.
Otherwise, its payoff depends on the number of organizations that have fallen victim and decided to pay.
We can calculate the expected number of victims who pay the ransom as 
\begin{equation}
\E[\text{number of victims who pay the ransom}] = \sum_j \sum_{i \in G_j} V_j(a_1, a_2)  \cdot p_i 
\end{equation}
since each organization $i \in G_j$ is compromised with probability $V_j$, and $p_i = 1$ if organization $i$ chooses to pay (and $p_i = 0$ if it does not pay).

Then, we can express the attacker's expected payoff simply as
\begin{equation}
\E\left[\calU_A\right] = \left[\sum_j \sum_{i \in G_j} V_j(a_1, a_2)\cdot p_i \right] \cdot r - C_A \cdot (a_1 + a_2) - C_D , 
\end{equation}
where $C_A$ is the unit cost of attack effort, and $C_D$ is the fixed cost of developing a ransomware, which the attacker must pay if it decides to engage (i.e., if $a_1 > 0$ or $a_2 > 0$).

\subsection{Solution Concepts}
\label{sub:solCon}

We assume that every player is interested in maximizing its expected payoff, and we use subgame perfect Nash equilibrium as our solution concept.
We also assume that organizations always break ties (i.e., when both paying and not paying are best responses) by choosing to pay.
Note that the latter assumption has no practical implications, it only serves to avoid pathological mathematical~cases.

Further, in our numerical analysis in Section~\ref{sec:numerical}, we will use the \textit{social optimum} concept for comparison with the Nash equilibrium results. 
In the social optimum, a social planner can coordinate the decisions of organizations, such that it yields the maximum aggregate outcome for the organizations, subject to an optimal response by the attacker (who is not guided by the social planner).


\section{Analysis}
\label{sec:analysis}

In this section, we analyze our proposed game-theoretic model of the ransomware ecosystem. Our solution concept, as mentioned in Section~\ref{sub:solCon}, is the subgame perfect Nash equilibrium. Hence, in our analysis, we first calculate each organization's decision in Stage II in Section~\ref{sub:stage2}. In other words, we derive under what conditions a victim organization will pay the requested ransom from the attacker.  Then, we calculate the best-response backup strategy for each organization in Stage I of the game in Section~\ref{sub:Backup}. Third, we calculate the attacker's best-response, i.e., demanded ransom and the attacker's effort, in Section~\ref{sub:AttBR}. By calculating the attacker's and the organizations' best-responses, we can then derive the Nash equilibrium in Section~\ref{sub:equilibrium}. 

\subsection{Compromised Organizations' Ransom Payment Decisions}
\label{sub:stage2}

We begin our analysis by studying the compromised organizations' best-response payment strategies in the second stage of the game.

\begin{lemma}
\label{lem:paymentBestResponse}
For organization $i \in G_j$, paying the ransom (i.e., $p_i = 1$) is a best response if and only if
\begin{equation}
r \leq \frac{L_j}{b_i} .
\end{equation}
\end{lemma}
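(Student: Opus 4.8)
The plan is to compare the two payoffs available to a compromised organization in Stage II and determine when paying weakly dominates not paying. Since the backup effort $b_i$ was already chosen in Stage I, in the subgame following compromise the quantities $W_j$, $C_B$, $b_i$, $\beta$, $F_j$, $T_j$, and $r$ are all fixed constants; the only free variable is the binary decision $p_i \in \{0, 1\}$. I would therefore simply read off the two payoff expressions given immediately before Equation~\eqref{eq:payoffCompromisedOrg}.

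First I would write down the payoff from paying, namely
\begin{equation}
W_j - C_B \cdot b_i - \beta \left( \frac{F_j}{b_i} + T_j + r \right), \nonumber
\end{equation}
and the payoff from not paying, namely
\begin{equation}
W_j - C_B \cdot b_i - \beta \left( \frac{F_j + L_j}{b_i} + T_j \right). \nonumber
\end{equation}
Paying is a best response precisely when the first is at least as large as the second.

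Next I would subtract the two expressions and cancel the common terms $W_j - C_B \cdot b_i - \beta F_j / b_i - \beta T_j$, which leaves the condition $-\beta r \geq -\beta L_j / b_i$. Assuming $\beta > 0$ (the discount factor is strictly positive), I would divide through by $-\beta$, flipping the inequality, to obtain exactly $r \leq L_j / b_i$. The biconditional then follows because the comparison of two real numbers is itself an equivalence: paying is a best response if and only if its payoff is at least that of not paying.

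I expect no genuine obstacle here, as the result reduces to an elementary algebraic comparison of two affine expressions in $p_i$. The only point deserving a word of care is the boundary case $r = L_j / b_i$, where the two payoffs coincide; this is handled cleanly by the tie-breaking convention stated in Section~\ref{sub:solCon}, under which the organization pays, so that the ``best response'' set includes $p_i = 1$ exactly on the closed region $r \leq L_j / b_i$ and the stated \emph{if and only if} with a weak inequality is correct.
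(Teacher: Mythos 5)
Your proof is correct and follows essentially the same route as the paper: the paper's proof likewise reduces the decision to comparing the two payoffs, writing the best response as $\argmax_{p \in \{0,1\}} p \cdot \left( \frac{L_j}{b_i} - r \right)$ after cancelling common terms, which is exactly your subtraction argument in factored form. Your explicit handling of the tie case via the tie-breaking convention is a fine touch, consistent with the paper's (implicit) treatment of the boundary.
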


Proof of Lemma~\ref{lem:paymentBestResponse} is provided 
in Appendix~\ref{app:proof1}. 

Lemma~\ref{lem:paymentBestResponse} means that an organization will pay the demanded ransom if the demanded value is not higher than the average permanent data loss due to ransomware attack.

\subsection{Organizations' Backup Decisions}
\label{sub:Backup}

We next study the organizations' best-response backup strategies in the first stage.
We assume that compromised organizations will play their best responses in the second stage (see Lemma~\ref{lem:paymentBestResponse}), but we do not make any assumptions about the attacker's effort or ransom strategies.
We first characterize the organizations' best-response backup strategies when they do not face any attacks (Lemma~\ref{lem:backupBestResponseNoAttack}) and then in the case when they are threatened by ransomware (Lemma~\ref{lem:backupBestResponseAttack}).

\begin{lemma}
\label{lem:backupBestResponseNoAttack}
If the attacker chooses not to attack group $j$ (i.e., $a_j = 0$), then the unique best-response backup strategy for organization $i \in G_j$ is
\begin{equation}
b_i^* = \sqrt{\beta\frac{F_j}{C_B}} .
\end{equation}
\label{lem:BackupNoAtt}
\end{lemma}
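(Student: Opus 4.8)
The plan is to reduce the problem to an unconstrained single-variable concave maximization. First I would substitute $a_j = 0$ into the infection probability: since $V_j(a_1, a_2) = \frac{a_j}{D + (a_1 + a_2)}$, setting $a_j = 0$ forces $V_j(a_1, a_2) = 0$, and crucially this holds for \emph{every} value of the attacker's effort $a_{3-j}$ against the other group. Organization $i$ is therefore compromised with probability zero, so its expected utility collapses to the uncompromised payoff of Equation~\eqref{eq:payoffSecureOrg},
\[
\E\left[\calU_{O_i}\right] = W_j - C_B \cdot b_i - \beta \frac{F_j}{b_i} .
\]

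Next I would maximize this expression over $b_i \in \Real_+$ via the first-order condition. Differentiating with respect to $b_i$ gives $-C_B + \beta F_j / b_i^2$; equating to zero and taking the positive root yields the candidate $b_i^* = \sqrt{\beta F_j / C_B}$. To confirm that this stationary point is in fact the unique best response, I would verify strict concavity: the second derivative equals $-2\beta F_j / b_i^3$, which is strictly negative for all $b_i > 0$, so the objective is strictly concave on $(0, \infty)$ and the stationary point is its unique global maximizer. The boundary behavior corroborates this, as the objective diverges to $-\infty$ both as $b_i \to 0^+$ (through the $-\beta F_j / b_i$ term) and as $b_i \to \infty$ (through the $-C_B \cdot b_i$ term).

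I do not anticipate a genuine obstacle here, since the argument is elementary calculus once $V_j$ vanishes. The only point worth stating carefully is precisely that vanishing: because $V_j = 0$ holds independently of $a_{3-j}$, the best response is well-defined and takes the stated closed form without any assumption on the attacker's strategy against the other group, which matches the lemma's deliberately assumption-free framing.
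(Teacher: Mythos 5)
Your proposal is correct and follows essentially the same route as the paper's proof in Appendix~\ref{app:proof2}: reduce to the uncompromised payoff of Equation~\eqref{eq:payoffSecureOrg}, solve the first-order condition to get $b_i^* = \sqrt{\beta F_j / C_B}$, and confirm uniqueness via concavity. Your explicit observations that $V_j = 0$ holds independently of the attacker's effort against the other group, and that the objective diverges to $-\infty$ at both boundaries, are small but welcome refinements the paper leaves implicit.
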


Proof of Lemma~\ref{lem:backupBestResponseNoAttack} is provided 
in Appendix~\ref{app:proof2}. 

Note that in Lemma~\ref{lem:backupBestResponseNoAttack}, an organization chooses its backup strategy by considering data loss due to random failures rather than data loss due to ransomware attack since that organization is not chosen to be attacked by the attacker. 

Lemma~\ref{lem:backupBestResponseAttack} calculates an organization's best-response backup strategy. Note that an organization chooses its backup strategy at Stage I. In this stage, an organization does not know whether it is the target of a ransomware attack and if that organization is the target of a ransomware attack, whether the attack is successful. 

\begin{lemma}
\label{lem:backupBestResponseAttack}
If the attacker chooses to attack group $j$ (i.e., $a_j > 0$), then the best-response backup strategy $b_i^*$ for organization $i \in G_j$ is
\begin{itemize}
\item if $b_j^{\text{low}} > \frac{L_j}{r}$, then $b_i^* = b_j^{\text{high}}$;
\item if $b_j^{\text{high}} < \frac{L_j}{r}$, then $b_i^* = b_j^{\text{low}}$;
\item otherwise, $b_i^* \in \left\{b_j^{\text{low}}, b_j^{\text{high}}\right\}$ (the one that gives the higher payoff or both if the resulting payoffs are equal), 
\end{itemize}
where $b_j^{\text{low}} = \sqrt{\beta\frac{F_j}{C_B}}$ and $b_j^{\text{high}} = \sqrt{\beta\frac{\left( F_j + V_j(a_1, a_2)  L_j \right)}{C_B}}$.
\end{lemma}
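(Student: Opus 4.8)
The plan is to substitute the Stage~II best response of Lemma~\ref{lem:paymentBestResponse} into the organization's expected utility and then maximize over $b_i$. By Lemma~\ref{lem:paymentBestResponse}, a compromised organization pays exactly when $b_i \le L_j/r$ and declines otherwise, with the boundary resolved in favor of paying by the tie-breaking rule. Substituting $p_i = 1$ on the region $b_i \le L_j/r$ and $p_i = 0$ on $b_i > L_j/r$ turns $\E[\calU_{O_i}]$ into a continuous piecewise function of $b_i$. On the ``pay'' branch it reduces, up to an additive constant in $b_i$, to $-C_B b_i - \beta F_j / b_i$, and on the ``no-pay'' branch to $-C_B b_i - \beta (F_j + V_j L_j)/b_i$, where I abbreviate $V_j = V_j(a_1,a_2)$. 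A quick check confirms that the two branches coincide at $b_i = L_j/r$ (there $r = L_j/b_i$, so the ransom term $V_j\beta r$ on the pay branch equals the data-loss term $\beta V_j L_j / b_i$ on the no-pay branch), so the objective is continuous across the threshold.

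Next I would analyze each branch separately. Each is of the form $-C_B b_i - \beta K / b_i$ plus a constant, which is strictly concave on $b_i > 0$ and has the unique unconstrained maximizer $\sqrt{\beta K/C_B}$ (exactly the computation behind Lemma~\ref{lem:backupBestResponseNoAttack}). Taking $K = F_j$ on the pay branch yields $b_j^{\text{low}}$, and $K = F_j + V_j L_j$ on the no-pay branch yields $b_j^{\text{high}}$; because $F_j \le F_j + V_j L_j$, we always have $b_j^{\text{low}} \le b_j^{\text{high}}$. Strict concavity then pins down the constrained maximum of each branch: on the domain $(0, L_j/r]$ the pay branch is maximized at $b_j^{\text{low}}$ if $b_j^{\text{low}} \le L_j/r$ and at the endpoint $L_j/r$ otherwise; on $[L_j/r, \infty)$ the no-pay branch is maximized at $b_j^{\text{high}}$ if $b_j^{\text{high}} \ge L_j/r$ and at the endpoint $L_j/r$ otherwise.

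Combining these with the ordering $b_j^{\text{low}} \le b_j^{\text{high}}$ produces the three cases. If $b_j^{\text{low}} > L_j/r$, then also $b_j^{\text{high}} > L_j/r$, the pay branch is increasing throughout its domain, and the global maximum sits at $b_j^{\text{high}}$; symmetrically, if $b_j^{\text{high}} < L_j/r$ the no-pay branch is decreasing throughout its domain and the optimum is $b_j^{\text{low}}$. In the remaining case $b_j^{\text{low}} \le L_j/r \le b_j^{\text{high}}$, both interior maximizers are feasible, so the best response is whichever of $b_j^{\text{low}}$ and $b_j^{\text{high}}$ yields the larger payoff (or both, if equal), matching the statement.

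The main subtlety to flag is that the glued objective is \emph{not} globally concave: comparing the one-sided derivatives at the kink $b_i = L_j/r$ shows that the slope jumps upward by $\beta V_j r^2 / L_j > 0$, so the function can carry two distinct local maxima, one on each branch. This is precisely why the intermediate case cannot be resolved by a first-order condition alone and instead requires a direct comparison of the two candidate payoffs. The continuity established in the first step is what makes the tie-breaking convention at the boundary immaterial, so no separate argument for the point $b_i = L_j/r$ is needed.
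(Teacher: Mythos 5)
Your proposal is correct and follows essentially the same route as the paper's proof: substitute the Stage~II best response from Lemma~\ref{lem:paymentBestResponse}, split the domain at $b_i = L_j/r$ into a pay branch and a no-pay branch, maximize each strictly concave branch subject to its constraint, and combine via the ordering $b_j^{\text{low}} \leq b_j^{\text{high}}$ into the three stated cases. Your explicit continuity check at the kink and the computation of the upward derivative jump $\beta V_j r^2/L_j$ are welcome refinements of the paper's brief remark that the objective agrees at $b_i = L_j/r$, but they do not change the argument.
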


Proof of Lemma~\ref{lem:backupBestResponseAttack} is provided \ifExtendedVersion
in Appendix~\ref{app:proof3}. 
\else
online in the extended version of the paper~\cite{extended_version}.
\fi

Lemma~\ref{lem:backupBestResponseAttack} shows the best-response backup strategy when an organization is under attack. If the demanded ransom value is high, i.e., $r > \frac{L_j}{b_j^{\text{low}}}$, an organization takes into account the data loss due to ransomware attack as well as the data loss due to random failure when choosing the backup strategy level. On the other hand, if the demanded ransom is low, i.e., $r < \frac{L_j}{b_j^{\text{high}}}$, an organization does not care about the data loss due to ransomware attack even when that organization is under attack. In other words, that organization behaves like an organization that is not under ransomware attack, i.e., similar to Lemma~\ref{lem:backupBestResponseNoAttack}. 

\subsection{Attacker's Best Response}
\label{sub:AttBR}

Building on the characterization of the organizations' best responses, we now characterize the attacker's best-response strategies.
Notice that the lemmas presented in the previous section show that an organization's best response does not depend on the identity of the organization, only on its group.
Since we are primarily interested in studying equilibria, in which everyone plays a best response, we can make the following assumptions:
\begin{itemize}
\item All organizations within a group $j$ play the same backup strategy, which is denoted by $\hat{b}_j$.
\item $\frac{L_1}{\hat{b}_1} \leq \frac{L_2}{\hat{b}_2}$.
\end{itemize}
The second assumption is without loss of generality since we could easily re-number the groups.

In Lemma~\ref{lem:ransomBestResponse}, we calculate the attacker's best-response demanded ransom given the attacker's effort and the organizations' backup strategies. 

\begin{lemma}
\label{lem:ransomBestResponse}
If the attacker's effort $(a_1, a_2)$ is fixed, then its best-response ransom demand $r^*$ is 
\begin{itemize}
\item $\frac{L_1}{\hat{b}_1}$ if $\left|G_1\right|  V_1(a_1, a_2) \frac{L_1}{\hat{b}_1} > \left|G_2\right|  V_2(a_1, a_2) \left( \frac{L_2}{\hat{b}_2} -  \frac{L_1}{\hat{b}_1} \right)$
\item $\frac{L_2}{\hat{b}_2}$ if $\left|G_1\right|  V_1(a_1, a_2) \frac{L_1}{\hat{b}_1} < \left|G_2\right|  V_2(a_1, a_2) \left( \frac{L_2}{\hat{b}_2} -  \frac{L_1}{\hat{b}_1} \right)$
\item both $\frac{L_1}{\hat{b}_1}$ and $\frac{L_2}{\hat{b}_2}$ otherwise.
\end{itemize}
\label{Lem:AttBR-Ransom}
\end{lemma}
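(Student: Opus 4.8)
The plan is to treat the attacker's choice of $r$ as a one-dimensional optimization with the attack effort $(a_1, a_2)$ held fixed. Since the effort is fixed, the cost terms $C_A \cdot (a_1 + a_2) + C_D$ in $\E\left[\calU_A\right]$ are constant, so maximizing the attacker's payoff over $r$ is equivalent to maximizing the ransom revenue $R(r) = \big[\sum_j |G_j| \, V_j(a_1,a_2) \, p_j(r)\big] \cdot r$, where $p_j(r)$ is the common payment decision of group $j$. This decision is common because, by the preceding lemmas, all organizations within a group play the same best-response backup level $\hat{b}_j$, and hence react identically to a given $r$.

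First I would invoke Lemma~\ref{lem:paymentBestResponse}: with the common backup level $\hat{b}_j$, group $j$ pays precisely when $r \leq \frac{L_j}{\hat{b}_j}$ (paying at equality, by the tie-breaking convention). Combined with the ordering assumption $\frac{L_1}{\hat{b}_1} \leq \frac{L_2}{\hat{b}_2}$, this partitions $\Real_{\geq 0}$ into three intervals: on $[0, \frac{L_1}{\hat{b}_1}]$ both groups pay and $R(r) = (|G_1| V_1 + |G_2| V_2)\, r$; on $(\frac{L_1}{\hat{b}_1}, \frac{L_2}{\hat{b}_2}]$ only group $2$ pays and $R(r) = |G_2| V_2 \, r$; and on $(\frac{L_2}{\hat{b}_2}, \infty)$ nobody pays, so $R(r) = 0$ (here I abbreviate $V_j = V_j(a_1,a_2)$).

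Next I would observe that $R$ is linear with nonnegative slope on each interval, so on each interval it is maximized at the right endpoint; moreover, the tie-breaking convention guarantees these endpoint maxima are actually attained. This is precisely the pathological case the convention is designed to rule out: without it, the supremum over $[0, \frac{L_1}{\hat{b}_1}]$ would fail to be achieved and no best response would exist. The third interval contributes zero and is dominated, so the only candidate maximizers are the two thresholds $r = \frac{L_1}{\hat{b}_1}$ and $r = \frac{L_2}{\hat{b}_2}$, with respective revenues $(|G_1| V_1 + |G_2| V_2)\frac{L_1}{\hat{b}_1}$ and $|G_2| V_2 \frac{L_2}{\hat{b}_2}$.

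Finally I would compare these two revenues. The first exceeds the second exactly when $(|G_1| V_1 + |G_2| V_2)\frac{L_1}{\hat{b}_1} > |G_2| V_2 \frac{L_2}{\hat{b}_2}$, which rearranges to $|G_1| V_1 \frac{L_1}{\hat{b}_1} > |G_2| V_2 \left( \frac{L_2}{\hat{b}_2} - \frac{L_1}{\hat{b}_1} \right)$, yielding $r^* = \frac{L_1}{\hat{b}_1}$; the reverse strict inequality yields $r^* = \frac{L_2}{\hat{b}_2}$, and equality makes both optimal, matching the three cases of the statement. I expect the only genuinely delicate point to be the boundary bookkeeping in the third step---correctly tracking which group still pays at each threshold and appealing to the tie-breaking rule to ensure the maximum exists---rather than the final comparison, which is a routine rearrangement.
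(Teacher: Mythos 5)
Your proposal is correct and follows essentially the same route as the paper's proof: both reduce the attacker's problem to maximizing $\sum_j \left|G_j\right| V_j(a_1,a_2) \cdot r \cdot 1_{\left\{r \leq \frac{L_j}{\hat{b}_j}\right\}}$, observe that the optimum is attained only at the thresholds $\frac{L_1}{\hat{b}_1}$ or $\frac{L_2}{\hat{b}_2}$, and compare the two resulting revenues via the same rearrangement. Your version merely spells out what the paper leaves as ``clearly''---the piecewise-linear interval structure and the role of the tie-breaking convention in making the endpoint maxima attainable---which is a faithful elaboration, not a different argument.
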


Proof of Lemma~\ref{lem:ransomBestResponse} is provided 
in Appendix~\ref{app:proof4}. 



Lemma~\ref{lem:AttBREffort} shows how the attacker divides its best-response attack effort between the two groups of organizations. Here, we assume that $a_1+a_2 = a_{sum}$, where $a_{sum}$ is a constant. Note that it is possible that the attacker decides not to attack either of the groups of organizations. The reason is that the benefit for the attacker from a ransomware attack may be lower than the cost of the attack. Hence, a rational attacker will abstain from attacking either of the groups. 

\begin{lemma}
	The attacker's best-response attack effort $(a_1^*, a_2^*)$ is as follows:
	
	\begin{itemize}
		\item $a_1^* = 0$ and $a_2^* = a_{sum}$ if $|G_1| \cdot 1_{\left\{r \leq \frac{L_1}{\hat{b}_1}\right\}} < |G_2| \cdot 1_{\left\{r \leq \frac{L_2}{\hat{b}_2}\right\}}$ and $\frac{a_{\text{sum}} }{D + a_{\text{sum}}} |G_2| \cdot r \cdot 1_{\left\{r \leq \frac{L_2}{\hat{b}_2}\right\}} > C_A \cdot a_{sum} + C_D$,
		
		\item $a_1^* = a_{\text{sum}}$ and $a_2^*=0$ if $|G_1| \cdot 1_{\left\{r \leq \frac{L_1}{\hat{b}_1}\right\}} > |G_2| \cdot 1_{\left\{r \leq \frac{L_2}{\hat{b}_2}\right\}}$ and $\frac{a_{\text{sum}} }{D + a_{\text{sum}}} |G_1| \cdot r \cdot 1_{\left\{r \leq \frac{L_2}{\hat{b}_2}\right\}} > C_A \cdot a_{sum} + C_D$,
		
		\item any $a_1^*$ between $0$ and $a_{\text{sum}}$ and $a_2^*= a_{sum}-a_1^*$ if $|G_1| \cdot 1_{\left\{r \leq \frac{L_1}{\hat{b}_1}\right\}} = |G_2| \cdot 1_{\left\{r \leq \frac{L_2}{\hat{b}_2}\right\}}$ and $\frac{a_{\text{sum}} }{D + a_{\text{sum}}} |G_2| \cdot r \cdot 1_{\left\{r \leq \frac{L_2}{\hat{b}_2}\right\}} > C_A \cdot a_{sum} + C_D$.
		
		\item $a_1^* = a_2^*=0$ otherwise. 
	\end{itemize}
	\label{lem:AttBREffort}
\end{lemma}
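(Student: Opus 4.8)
The plan is to start from the attacker's expected payoff $\E[\calU_A]$ and successively eliminate the organizations' decision variables, reducing the attacker's problem to a one-dimensional linear optimization over the budget split. First I would invoke Lemma~\ref{lem:paymentBestResponse} together with the standing assumption that every organization in group $j$ plays the common backup level $\hat{b}_j$: this forces $p_i = 1_{\left\{r \leq \frac{L_j}{\hat{b}_j}\right\}}$ for all $i \in G_j$, so each inner sum over $G_j$ collapses to $|G_j| \cdot V_j(a_1,a_2) \cdot 1_{\left\{r \leq \frac{L_j}{\hat{b}_j}\right\}}$. Using $V_j(a_1,a_2) = \frac{a_j}{D + a_{\text{sum}}}$ (valid because the budget $a_1 + a_2 = a_{\text{sum}}$ is held fixed), the payoff becomes
\begin{equation}
\E[\calU_A] = \frac{r}{D + a_{\text{sum}}} \left( |G_1|\, a_1 \cdot 1_{\left\{r \leq \frac{L_1}{\hat{b}_1}\right\}} + |G_2|\, a_2 \cdot 1_{\left\{r \leq \frac{L_2}{\hat{b}_2}\right\}} \right) - C_A\, a_{\text{sum}} - C_D . \nonumber
\end{equation}

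Next I would exploit that, with $r$, $a_{\text{sum}}$, $C_A$, and $C_D$ all fixed, the prefactor $\frac{r}{D + a_{\text{sum}}}$ is a positive constant and $-C_A a_{\text{sum}} - C_D$ is an additive constant. Hence maximizing $\E[\calU_A]$ over the segment $\{(a_1,a_2) : a_1 + a_2 = a_{\text{sum}},\ a_1,a_2 \geq 0\}$ is equivalent to maximizing the linear functional $|G_1|\, a_1 \cdot 1_{\left\{r \leq \frac{L_1}{\hat{b}_1}\right\}} + |G_2|\, a_2 \cdot 1_{\left\{r \leq \frac{L_2}{\hat{b}_2}\right\}}$. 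Substituting $a_2 = a_{\text{sum}} - a_1$ makes this an affine function of $a_1$ whose slope is $|G_1| \cdot 1_{\left\{r \leq \frac{L_1}{\hat{b}_1}\right\}} - |G_2| \cdot 1_{\left\{r \leq \frac{L_2}{\hat{b}_2}\right\}}$; since an affine function on an interval attains its maximum at an endpoint, the attacker devotes the entire budget to whichever group has the strictly larger coefficient $|G_j| \cdot 1_{\left\{r \leq \frac{L_j}{\hat{b}_j}\right\}}$, with any split being optimal when the coefficients coincide. This directly yields the three engaging allocation cases in the statement.

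Finally I would fold in the participation decision. The allocation above is optimal conditional on the attacker spending the full budget, but engaging is worthwhile only if the resulting payoff beats the outside option of not attacking, which gives payoff zero. Substituting the optimal allocation back into $\E[\calU_A]$ and requiring it to exceed zero produces the threshold $\frac{a_{\text{sum}}}{D + a_{\text{sum}}} |G_j|\, r \cdot 1_{\left\{r \leq \frac{L_j}{\hat{b}_j}\right\}} > C_A a_{\text{sum}} + C_D$ attached to each engaging case, whose failure recovers the final $(a_1^*, a_2^*) = (0,0)$ case.

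I expect the main obstacle to be bookkeeping rather than conceptual depth: the optimization itself is a one-line linear-programming observation, so the care lies in (i) cleanly separating the allocation decision (which endpoint) from the participation decision (whether to engage at all), as these are governed by two distinct inequalities, and (ii) matching the weak indicator inequalities $r \leq \frac{L_j}{\hat{b}_j}$ to the strict comparisons in the case split, including the tie-breaking convention of Lemma~\ref{lem:paymentBestResponse} at $r = \frac{L_j}{\hat{b}_j}$. I would also confirm that the listed conditions are mutually exhaustive, so that exactly one case applies.
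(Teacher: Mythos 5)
Your proposal is correct and follows essentially the same route as the paper's proof: substitute the stage-II best-response payments $p_i^*(r) = 1_{\left\{r \leq \frac{L_j}{\hat{b}_j}\right\}}$ from Lemma~\ref{lem:paymentBestResponse}, use $V_j(a_1,a_2) = \frac{a_j}{D + a_{\text{sum}}}$ with the budget fixed to obtain an objective affine in $a_1$, pick the endpoint with the larger coefficient $|G_j| \cdot 1_{\left\{r \leq \frac{L_j}{\hat{b}_j}\right\}}$, and add the non-negativity (participation) comparison against the zero payoff of abstaining. The only difference is one of detail: the paper compresses the endpoint argument and participation check into ``the best strategy can be calculated readily,'' whereas you spell them out explicitly (and, in doing so, state the participation threshold for the group-1 case with the indicator $1_{\left\{r \leq \frac{L_1}{\hat{b}_1}\right\}}$, which corrects what appears to be a typo in the lemma's second bullet).
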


Proof of Lemma~\ref{lem:AttBREffort} is provided 
in Appendix~\ref{app:proof5}. 

\subsection{Equilibria}
\label{sub:equilibrium}

Proposition~\ref{pro:NE} provides the necessary and sufficient conditions for the attacker's strategy to abstain from attack, i.e., $a_1^*=a_2^*=0$, and $\hat{b}_1^*=\sqrt{\beta\frac{F_1}{C_B}}$ and $\hat{b}_2^*=\sqrt{\beta\frac{F_2}{C_B}}$ is Nash equilibrium. 

\begin{proposition}
The attacker choosing \emph{not to attack} and the organizations choosing backup efforts $\sqrt{\beta\frac{F_1}{C_B}}$ and $\sqrt{\beta\frac{F_2}{C_B}}$ is an equilibrium if and only if each of the following conditions are satisfied:
\begin{itemize}
\item $ \frac{L_2 \cdot a_{sum} \cdot |G_2| \cdot 1_{\left\{r \leq \frac{L_2}{\hat{b}_2}\right\}}}{L_2\left(D + a_{sum}\right) \left(C_A \cdot a_{sum} + C_D\right)} < \sqrt{\beta\frac{F_2}{C_B}}$ and $|G_1| \cdot 1_{\left\{r \leq \frac{L_1}{\hat{b}_1}\right\}} \leq |G_2| \cdot 1_{\left\{r \leq \frac{L_2}{\hat{b}_2}\right\}}$

\item $ \frac{L_2 \cdot a_{sum} \cdot |G_1| \cdot 1_{\left\{r \leq \frac{L_1}{\hat{b}_1}\right\}}}{L_2\left(D + a_{sum}\right) \left(C_A \cdot a_{sum} + C_D\right)} < \sqrt{\beta\frac{F_2}{C_B}}$ and $|G_1| \cdot 1_{\left\{r \leq \frac{L_1}{\hat{b}_1}\right\}} > |G_2| \cdot 1_{\left\{r \leq \frac{L_2}{\hat{b}_2}\right\}}$
\end{itemize}
\label{pro:NE}
\end{proposition}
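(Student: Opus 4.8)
The plan is to read the proposition as the verification of a Nash equilibrium of the simultaneous-move Stage~I game, with Stage~II payment choices already pinned down by Lemma~\ref{lem:paymentBestResponse}. Since the profile fixes $a_1^*=a_2^*=0$ and $\hat b_j=\sqrt{\beta F_j/C_B}$, I would split the equilibrium check into two independent best-response conditions: (i) each organization's backup is optimal against a non-attacking adversary, and (ii) abstaining is optimal for the attacker against the backups $\hat b_1,\hat b_2$. Condition (i) is immediate and, in fact, forces the stated backup values: with $a_j=0$, Lemma~\ref{lem:backupBestResponseNoAttack} gives $\sqrt{\beta F_j/C_B}$ as the \emph{unique} best response for every $i\in G_j$. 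Hence the entire nontrivial content of the ``if and only if'' reduces to characterizing condition (ii) — exactly when the attacker's payoff of $0$ from not attacking weakly dominates every attacking deviation.

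For condition (ii) I would write the payoff of an arbitrary deviation $(a_1,a_2,r)$ with $a_1+a_2>0$. Because the backups are fixed, Lemma~\ref{lem:paymentBestResponse} determines who pays: group $j$ pays iff $r\le L_j/\hat b_j$, so the deviation payoff is $\big[\,|G_1| V_1\,1_{\{r\le L_1/\hat b_1\}}+|G_2| V_2\,1_{\{r\le L_2/\hat b_2\}}\big]\,r-C_A(a_1+a_2)-C_D$. I would then strip away the three choices in turn. Lemma~\ref{lem:ransomBestResponse} restricts the optimal ransom to $r\in\{L_1/\hat b_1,\,L_2/\hat b_2\}$. For a fixed total effort, Lemma~\ref{lem:AttBREffort} shows the optimal split concentrates all effort on the group with the larger $|G_j|\,1_{\{r\le L_j/\hat b_j\}}$ (the mechanism being that $V_1+V_2=a_{\text{sum}}/(D+a_{\text{sum}})$ is constant for fixed $a_{\text{sum}}$). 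This split is precisely what generates the two bullet points of the proposition, according to whether $|G_1|\,1_{\{r\le L_1/\hat b_1\}}\le|G_2|\,1_{\{r\le L_2/\hat b_2\}}$ or the reverse strict inequality holds; substituting $\hat b_j=\sqrt{\beta F_j/C_B}$ into the single-group payoff turns ``no deviation is profitable'' into the displayed fractional inequalities (the negation of the profitability clause of Lemma~\ref{lem:AttBREffort} evaluated at the equilibrium backups).

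The step I expect to be the main obstacle is the remaining continuous optimization over the total effort $a_{\text{sum}}=a_1+a_2$, which Lemma~\ref{lem:AttBREffort} treats as fixed, together with the interaction between the ransom choice and the group choice. Once all effort sits on a single group $k$, the revenue $|G_k|\,\tfrac{a_{\text{sum}}}{D+a_{\text{sum}}}\,r\,1_{\{\cdot\}}$ is strictly concave and increasing in $a_{\text{sum}}$ while the cost $C_A a_{\text{sum}}+C_D$ is linear, so the profit is unimodal with a unique interior maximizer; abstaining beats \emph{all} deviations iff it beats the one at this maximizer. The delicate bookkeeping is to track the indicators correctly (at $r=L_k/\hat b_k$ the tie is broken toward paying, so group $k$ still counts as a payer) and to compare the two admissible ransoms $L_1/\hat b_1$ and $L_2/\hat b_2$ — using the standing assumption $L_1/\hat b_1\le L_2/\hat b_2$ — so that the dominant deviation matches the group selected by the side condition, and to confirm the fixed development cost $C_D$ forecloses profitable small-scale attacks.

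Finally I would close both directions. Sufficiency: the stated inequalities make the maximized deviation payoff nonpositive in each regime, so abstaining is a best response and, with (i), the profile is an equilibrium. Necessity: by contraposition, if the relevant inequality fails then the maximizing $(a_1^*,a_2^*,r^*)$ identified above yields strictly positive attacker payoff — a profitable deviation that destroys the equilibrium; and any organization deviating from $\sqrt{\beta F_j/C_B}$ is already excluded by the uniqueness in Lemma~\ref{lem:backupBestResponseNoAttack}. This establishes the equivalence.
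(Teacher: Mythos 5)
Your proposal follows essentially the same route as the paper's proof: both reduce the equilibrium check to the attacker's abstention condition by combining Lemma~\ref{lem:backupBestResponseNoAttack} (the unique no-attack backups $\sqrt{\beta F_j/C_B}$), Lemma~\ref{lem:ransomBestResponse} (ransom restricted to $L_j/\hat{b}_j$), and Lemma~\ref{lem:AttBREffort} (effort concentrated on the group with the larger $|G_j|\cdot 1_{\{\cdot\}}$), and then require the attacker's resulting payoff to be negative after substituting the equilibrium backup values, yielding the two displayed inequalities case by case. Your one addition --- the explicit unimodality argument over the total effort $a_{sum}$, ensuring abstention need only beat the deviation at the maximizing $a_{sum}$ --- is a welcome tightening of a point the paper leaves implicit by stating its conditions for a fixed $a_{sum}$, but it does not change the underlying approach.
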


Proof of Proposition~\ref{pro:NE} is provided 
in Appendix~\ref{app:proofPro1}. 

\section{Numerical Illustrations}
\label{sec:numerical}

In this section, we present numerical results on our model.
We first compare equilibria to social optima, and we study the effect of changing the values of key parameters (Section~\ref{sec:numEqSocOpt}).
We then investigate interdependence between multiple groups of organizations, which is caused by the strategic nature of attacks, and we again study the effect of changing key parameters (Section~\ref{sec:numInter}).

For any combination of parameter values, our game has at most one equilibrium, which we will plot in the figures below.
However, for some combinations, the game does not have an equilibrium.
In these cases, we used iterative best responses:
\begin{compactenum}
\item starting from an initial strategy profile, 
\item we changed the attacker's strategy to a best response, 
\item we changed the organization's strategy to a best response, 
\item and then we repeated from Step 2.
\end{compactenum}
We found that regardless of the initial strategy profile, the iterative best-response dynamics end up oscillating between two strategy profiles.
Since these strategy profiles were very close, we plotted their averages in place of the equilibria in the figures below.

\subsection{Equilibria and Social Optima}
\label{sec:numEqSocOpt}

For clarity of presentation, we consider a single organization type in this subsection.
%
The parameter values used in this study are as follows: $|G|=100$, $W=100$, $\beta=0.9$, $F=5$, $L=5$, $T=10$, $C_B=1$, $D=10$, $C_A=10$, and $C_D=10$ (unless stated otherwise).

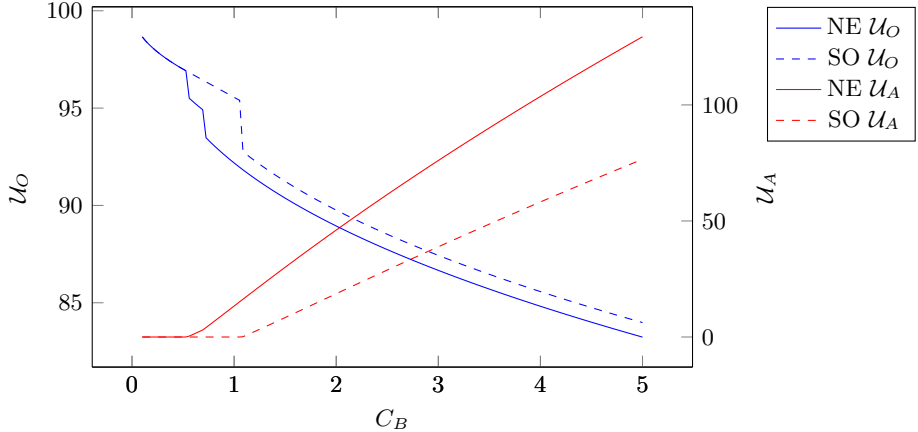
\begin{figure}[h!]
\centering
\begin{tikzpicture}[font=\small]
\begin{axis}[
  width=0.75\textwidth,
  height=0.5\textwidth,
  axis y line*=right,
  ylabel=$\calU_A$,
]
\addplot[no markers, solid, red] table[x=C_B, y=NE_att_payoff, col sep=comma, comment chars={\%}] {results/C_B.csv};
\addplot[no markers, dashed, red] table[x=C_B, y=SO_att_payoff, col sep=comma, comment chars={\%}] {results/C_B.csv};
\end{axis}
\begin{axis}[
  legend pos=outer north east,
  legend style={shift={(0.75cm,0cm)}},
  axis y line*=left,
  width=0.75\textwidth,
  height=0.5\textwidth,
  xlabel=$C_B$,
  ylabel=$\calU_O$,
]
\addplot[no markers, solid, blue] table[x=C_B, y=NE_org_payoff, col sep=comma, comment chars={\%}] {results/C_B.csv};
\addlegendentry{NE $\calU_O$}; 
\addplot[no markers, dashed, blue] table[x=C_B, y=SO_org_payoff, col sep=comma, comment chars={\%}] {results/C_B.csv};
\addlegendentry{SO $\calU_O$}; 
\addlegendimage{solid, red}
\addlegendentry{NE $\calU_A$}
\addlegendimage{dashed, red}
\addlegendentry{SO $\calU_A$}
\end{axis}
\end{tikzpicture}
\caption{The expected payoff of the attacker (\textcolor{red}{red}) and an individual organization (\textcolor{blue}{blue}) in equilibrium (solid line ---) and in social optimum (dashed line - - -) for various backup cost values $C_B$.}
\label{fig:C_B}
\end{figure}

Figure~\ref{fig:C_B} shows the expected payoffs of an individual organization and the attacker for various values of the unit cost $C_B$ of backup effort.
In practice, the unit cost of backup effort may change, for example, due to technological improvements (decreasing the cost) or growth in the amount of data to be backed up (increasing the cost).
When this cost is very low ($C_B < 0.5$), organizations can perform frequent and sophisticated backups, which means that the amount of data that may be compromised---and hence, the ransom that they are willing to pay---is very low.
As a result, the attacker is deterred from deploying ransomware ($\calU_A = 0$) since its income from ransoms would not cover its expenses.
For higher costs ($0.5 \leq C_B < 1$), the organizations' equilibrium payoff is much lower since they choose to save on backups, which incentivizes the attacker to deploy ransomware and extort payments from them.
In this case, the social optimum for the organizations is to maintain backup efforts and, hence, deter the attacker.
For even higher costs ($C_B \geq 1$), deterrence is not socially optimal.
However, the equilibrium payoffs are still lower since organizations shirk on backup efforts, which leads to more intense attacks and higher ransom demands.

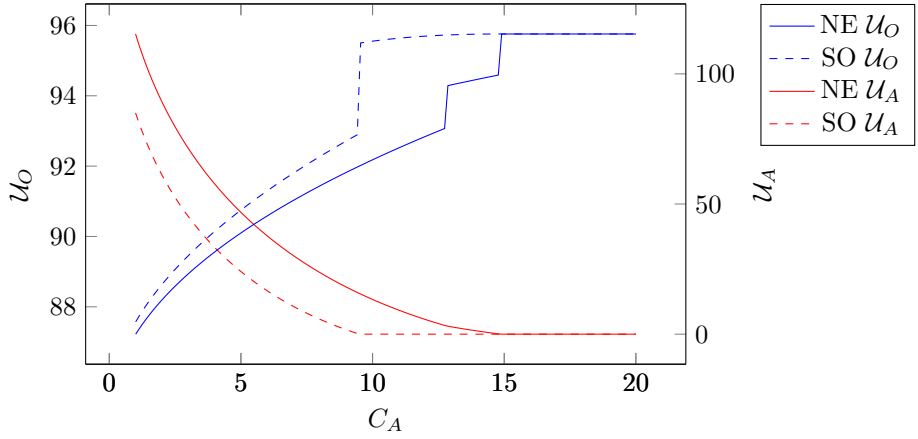
\begin{figure}[h!]
\centering
\begin{tikzpicture}
\begin{axis}[
  height=0.5\textwidth,
  width=0.75\textwidth,
  axis y line*=right,
  ylabel=$\calU_A$,
]
\addplot[no markers, solid, red] table[x=C_A, y=NE_att_payoff, col sep=comma, comment chars={\%}] {results/C_A.csv};
\addplot[no markers, dashed, red] table[x=C_A, y=SO_att_payoff, col sep=comma, comment chars={\%}] {results/C_A.csv};
\end{axis}
\begin{axis}[
  legend pos=outer north east,
  legend style={shift={(0.75cm,0cm)}},
  axis y line*=left,
  height=0.5\textwidth,
  width=0.75\textwidth,
  xlabel=$C_A$,
  ylabel=$\calU_O$,
]
\addplot[no markers, solid, blue] table[x=C_A, y=NE_org_payoff, col sep=comma, comment chars={\%}] {results/C_A.csv};
\addlegendentry{NE $\calU_O$}; 
\addplot[no markers, dashed, blue] table[x=C_A, y=SO_org_payoff, col sep=comma, comment chars={\%}] {results/C_A.csv};
\addlegendentry{SO $\calU_O$}; 
\addlegendimage{solid, red}
\addlegendentry{NE $\calU_A$}
\addlegendimage{dashed, red}
\addlegendentry{SO $\calU_A$}
\end{axis}
\end{tikzpicture}
\caption{The expected payoff of the attacker (\textcolor{red}{red}) and an individual organization (\textcolor{blue}{blue}) in equilibrium (solid line ---) and in social optimum (dashed line - - -) for various attack cost values $C_A$.}
\label{fig:C_A}
\end{figure}

Figure~\ref{fig:C_A} shows the expected payoff of an individual organization and the attacker for various values of the unit cost $C_A$ of attack effort.
In practice, the unit cost of attack effort can change, e.g., due to the development of novel attacks and exploits (lowering the cost) or the deployment of more effective defenses (increasing the cost).
Figure~\ref{fig:C_A} shows phenomena that are similar to the ones exhibited in Figure~\ref{fig:C_B}.
When the attacker is at a technological advantage (i.e., when $C_A$ is low), deterrence is not a realistic option for organizations.
However, they can improve their payoffs---compared to the equilibrium---by coordinating and investing more in backups, thereby achieving social optimum.
For higher attack costs ($10 < C_A \leq 15$), this coordination can result in significantly higher payoffs since deterrence becomes a viable option.
For very high attack costs ($C_A > 15$), compromising an organization costs more than what the attacker could hope to collect with ransoms; hence, coordination is no longer necessary to deter the attacker.  

\begin{figure}[h!]
\centering
\begin{tikzpicture}
\begin{axis}[
  width=0.75\textwidth,
  height=0.5\textwidth,
  axis y line*=right,
  ylabel=$\calU_A$,
]
\addplot[no markers, solid, red] table[x=beta, y=NE_att_payoff, col sep=comma, comment chars={\%}] {results/beta.csv};
\addplot[no markers, dashed, red] table[x=beta, y=SO_att_payoff, col sep=comma, comment chars={\%}] {results/beta.csv};
\end{axis}
\begin{axis}[
  legend pos=outer north east,
  legend style={shift={(0.75cm,0cm)}},
  axis y line*=left,
  width=0.75\textwidth,
  height=0.5\textwidth,
  xlabel=$\beta$,
  ylabel=$b$,
]
\addplot[no markers, solid, OliveGreen] table[x=beta, y=NE_b, col sep=comma, comment chars={\%}] {results/beta.csv};
\addlegendentry{NE $b$}; 
\addplot[no markers, dashed, OliveGreen] table[x=beta, y=SO_b, col sep=comma, comment chars={\%}] {results/beta.csv};
\addlegendentry{SO $b$}; 
\addlegendimage{solid, red}
\addlegendentry{NE $\calU_A$}
\addlegendimage{dashed, red}
\addlegendentry{SO $\calU_A$}
\end{axis}
\end{tikzpicture}
\caption{The attacker's expected payoff (\textcolor{red}{red}) and the organizations' backup strategy (\textcolor{OliveGreen}{green}) in equilibrium (solid line ---) and in social optimum (dashed line - - -) for various discounting factor values $\beta$.}
\label{fig:beta}
\end{figure}
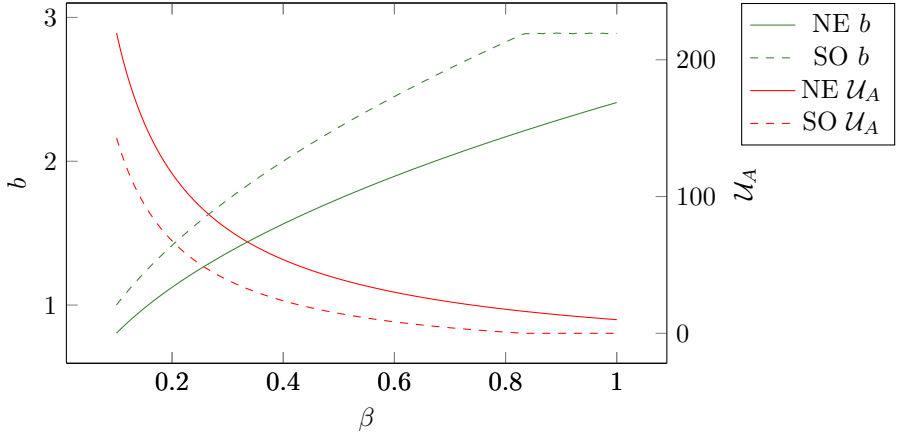

Figure~\ref{fig:beta} shows how the organizations' backup efforts $b$ and the attacker's payoff are effected by the behavioral discount factor $\beta$.
With low values of $\beta$, organizations underappreciate future consequences; hence, they shirk on backup efforts (as evidenced by low values of $b$).
With high values of $\beta$, organizations care more about future losses, so they invest more in backup efforts (resulting in high values of $b$).
We see that in all cases, there is a significant difference between the equilibrium and the social optimum.
This implies that regardless of the organizations' appreciation of future consequences, coordination is necessary.
In other words, low backup efforts cannot be attributed only to behavioral factors.

\subsection{Interdependence}
\label{sec:numInter}

Now, we study the interdependence between two groups of organizations.
We instantiate the parameters of both organizations (and the attacker) with the same values as in the previous subsection. Note that more numerical illustrations are available \ifExtendedVersion
in Appendix~\ref{app:Num}. 
\else
online in the extended version of the paper~\cite{extended_version}.
\fi

\pgfplotsset{figL/.style={%
  width=0.55\linewidth,
  xlabel={$L_1$},
  ylabel={$L_2$},
  thick,
  xtick={1, 5, 10},
  ytick={1, 5, 10},
  xlabel shift=-0.5em,
  ylabel shift=-0.5em,
  zlabel shift=-0.5em,
}}

\begin{figure}[h!]
\centering
\renewcommand*{\arraystretch}{0}
\setlength{\tabcolsep}{1.5pt}
\begin{tabular}{cc}
\begin{tikzpicture}[font=\small]
\begin{axis}[
  figL,
  view={100}{55},
]
\addplot3 [surf, mesh/ordering=x varies, mesh/rows=30, mesh/cols=30] table[x=L1, y=L2, z=NE_org1_payoff, col sep=comma, comment chars={\%}] {results/L.csv}; 
\end{axis}
\end{tikzpicture}
&
\begin{tikzpicture}
\begin{axis}[
  figL,
  view={170}{55},
]
\addplot3 [surf, mesh/ordering=x varies, mesh/rows=30, mesh/cols=30] table[x=L1, y=L2, z=NE_org2_payoff, col sep=comma, comment chars={\%}] {results/L.csv}; 
\end{axis}
\end{tikzpicture}
\\
(a) payoff for an organization of type 1 & (b) payoff for an organization of type 2 \\[1.2em]
\multicolumn{2}{c}{
\begin{tikzpicture}
\begin{axis}[
  figL,
  view={300}{55},
]
\addplot3 [surf, mesh/ordering=x varies, mesh/rows=30, mesh/cols=30] table[x=L1, y=L2, z=NE_att_payoff, col sep=comma, comment chars={\%}] {results/L.csv}; 
\end{axis}
\end{tikzpicture}
}
\\
\multicolumn{2}{c}{(c) payoff for the attacker}
\end{tabular}
\caption{The expected payoff of individual organizations of (a) type 1 and (b) type 2 as well as (c) the attacker in equilibrium for various data loss costs $L_1$ and $L_2$.}
\label{fig:L}
\end{figure}
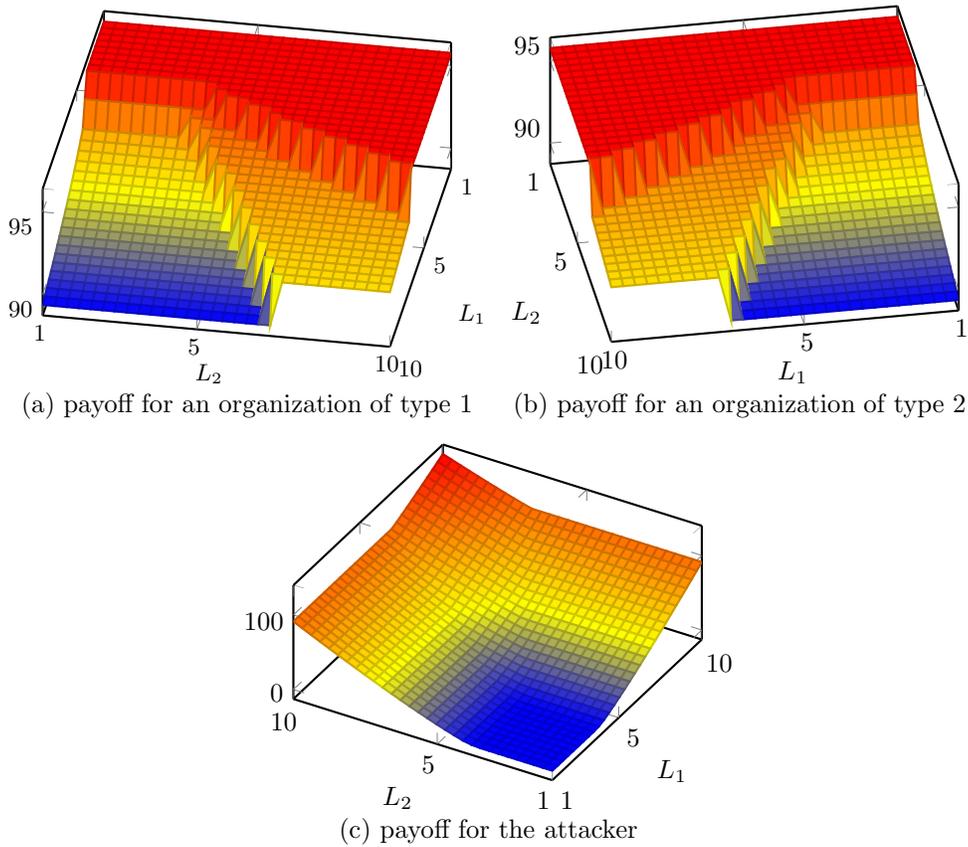

Figure~\ref{fig:L} shows the payoffs of individual organizations from the two groups as well as the attacker, for various values of the costs $L_1$ and $L_2$ of permanent data loss.
As expected, we see from the attacker's payoff (Figure~\ref{fig:L}(c)) that as loss costs increase, organizations become more willing to pay higher ransoms, so the attacker's payoff increases.
On the other hand, we observe a more interesting phenomenon in the organizations' payoffs.
As the loss cost (e.g., $L_1$) of one group (e.g., group 1) increases, the payoff of organizations in that group (e.g., Figure~\ref{fig:L}(a)) decreases.
However, we also see an \emph{increase} in the payoff (e.g., Figure~\ref{fig:L}(b)) of organizations in the \emph{other} group (e.g., group 2).
The reason for this increase is in the strategic nature of attacks: as organizations in one group become more attractive targets, attackers are more inclined to focus their efforts on this group, which results in lower intensity attacks against the other group.
This substitution effect, which can be viewed as a negative externality between groups of organizations, is strong when the attacker's efforts are focused (e.g., when ransomware is deployed using spear-phishing campaigns).

\section{Related Work}
\label{sec:related}



\textbf{Ransomware:} Early work by Luo and Liao, in 2007 and 2009, respectively, represented first exploratory analyses of the ransomware phenomenon \cite{luo2007awareness,luo2009ransomware}. They focus on increased awareness (in particular, by employees) as a major means to diminish the effectiveness of ransomware attacks, which is a key recommendation regarding ransomware mirrored in the 2017 Verizon DBIR report ten years later: ``stress the importance of software updates to anyone who'll listen \cite{Verizon17}.''

In 2010, Gazet investigated the quality of code, functionalities and cryptographic primitives of 15 samples of ransomware~\cite{gazet2010comparative}. The studied sample of ransomware malware was quite basic from the perspective of programming quality and sophistication, and did not demonstrate a high level of thoroughness regarding the application of cryptographic primitives. However, the analysis also showed the ability to mass propagate as a key feature.

Highlighting ransomware's increasing relevance, Proofpoint reported that 70\% of all malware encountered in the emails of its customer base during a 10-month interval in 2016 was ransomware. At the same time, the same company reported that the number of malicious email attachments grew by about 600\% in comparison to 2015 \cite{Proofpoint16}. In addition, many modern forms of ransomware have worm capabilities as demonstrated in a disconcerting fashion by the 2017 WannaCrypt/WannaCry attack, which affected 100,000s of systems of individual users and organizations leading even to the breakdown of industrial facilities.

Other studies also focus on providing practical examples of and empirical data on ransomware including work by O'Gorman and McDonald~\cite{ogorman2012ransomware}, who provide an in-depth perspective of specific ransomware campaigns and their financial impact. In a very comprehensive fashion, Kharraz et al. analyze ransomware code observed in the field between 2006 and 2014~\cite{kharraz2015cutting}. Their results mirror Gazet's observation on a much broader pool of 1,359 ransomware samples, i.e., currently encountered ransomware lacks complexity and sophistication. 

Nevertheless, it causes major harm. To cite just a few figures, the total cost of ransomware attacks (including paid ransoms) increased to \$209 Million for the first three months in 2016 according to FBI data. In comparison, for 2015 the FBI only reported damages of about \$24 Million \cite{Reuters16}. 

Drawing on their earlier research, Kharraz et al. developed practices to stem the impact of ransomware. Their key insight is that ransomware needs to temper with user or business data, so that increased monitoring of data repositories can stop ransomware attacks while they unfold, and detect novel attacks that bypassed even sophisticated preventative measures \cite{kharraz2016unveil}. Scaife et al. also present an early-warning detection approach regarding suspicious file activity~\cite{scaife2016cryptolock}. 

Extending this line of work to a new context, Andronio et al. study ransomware in the mobile context and develop an automated approach for detection while paying attention to multiple typical behaviors of ransomware including the display of ransom notes on the device's screen ~\cite{andronio2015heldroid}. Likewise, Yang et al. focus on mobile malware detection, and specifically ransomware identification~\cite{yang2015automated}.

Ransomware attacks appear to be predominantly motivated by financial motives, which supports the usage of an economic framework for their analysis. However, other related types of attacks such as malicious device bricking (see, for example, the BrickerBot attack focusing on IoT devices \cite{Mcafee17}) may be based on a purely destructive agenda, with less clearly identifiable motives.

While knowledge about the technical details and financial impact of ransomware is growing, we are unaware of any research which focuses on the strategic economic aspects of the interactions between cybercriminals that distribute ransomware and businesses or consumers who are affected by these actions.

\textbf{Economics of Security:} Game-theoretic models to better understand security scenarios have gained increased relevance due to the heightened professionalism of cybercriminals. 
Of central interest are models that capture interdependencies or externalities arising from actions by defenders or attackers \cite{Laszka_survey}. 

A limited number of research studies focus on the modeling of the attack side. For example, Schechter and Smith capture different attacker behaviors \cite{Schechter03}. In particular, they consider the cases of serial attacks where attackers aim to compromise one victim after another, and the case of a parallel attack, where attackers can automate their attacks to focus on multiple defenders at one point in time. We follow the latter approach, which has high relevance for self-propagating ransomware such as WannaCrypt/WannaCry.

Another relevant aspect of our work are incentives to invest in backup technologies, which have found only very limited consideration in the literature. Grossklags et al. investigate how a group of defenders individually decide on how to split security investments between preventative technologies and recovery technologies (called self-insurance) \cite{grossklags2008secure}. In their model, preventative investments are subject to interdependencies drawing on canonical models from the literature on public goods \cite{Varian04}, while recovery investments are effective independent from others' choices. Fultz and Grossklags \cite{Fultz09} introduce strategically acting attackers in this framework, who respond to preventative investments by all defenders. In our model, backup investments are also (partially) effective irrespective of others' investment choices. However, in the context of ransomware, pervasive investments in backup technologies can have a deterrence and/or displacement effect on attackers \cite{Becker68}, which we capture with our work.

While we draw on these established research directions, to the best of our knowledge, our work is the first game-theoretic approach focused on ransomware.

\section{Concluding Remarks}
\label{sec:concl}

In this paper, we have developed a game-theoretic model, which is focused on key aspects of the adversarial interaction between organizations and ransomware attackers. In particular, we place significant emphasis on the modeling of security investment decisions for mitigation, i.e., level of backup effort, as well as the strategic decision to pay a ransom or not. 

These factors are interrelated and also influence attacker behavior. For example, in the context of kidnappings by terrorists it has been verified based on incident data that negotiating with kidnappers and making concessions encourages substantially more kidnappings in the future \cite{Brandt16}. We would expect a similar effect in the context of ransomware, where independently acting organizations who are standing with the ``back against the wall'' have to make decisions about ransom payments to get operations going again, or to swallow the bitter pill of rebuilding from scratch and not giving in to cybercriminals. Indeed, our analysis shows that there is a sizable gap between the decentralized decision-making at equilibrium and the socially optimal outcome. This raises the question whether organizations paying ransoms should be penalized? However, this (in turn) poses a moral dilemma, for example, when patient welfare at hospitals or critical infrastructure such at power plants are affected \textit{now}.

An alternative pathway is to (finally) pay significantly more attention to backup efforts as a key dimension of overall security investments. The relative absence of economic research focused on optimal mitigation and recovery strategies is one key example of this omission. A laudable step forward is the recently released factsheet document by the U.S. Department of Health \& Human Service on ransomware and the Health Insurance Portability and Accountability Act (HIPAA) \cite{Hipaa17}. It not only states that the encryption of health data by ransomware should be considered a security breach under HIPAA (even though no data is exfiltrated\footnote{The reasoning is as follows: ``When electronic protected health information (ePHI) is encrypted as the result of a ransomware attack, a breach has occurred because the ePHI encrypted by the ransomware was acquired (i.e., unauthorized individuals have taken possession or control of the information), and thus is a ``disclosure'' not permitted under the HIPAA Privacy Rule. \cite{Hipaa17}''}), but also that having a data backup plan is a required security effort for all HIPAA covered organizations.

An interesting question for future research is the role of cyberinsurance in the context of ransomware, i.e., specifically policies including cyber-extortion. How would these policies have to be designed to achieve desirable outcomes? As discussed above, in the case of kidnappings one would worry about incentivizing future kidnappings by making concessions via kidnapping insurance \cite{Fink14}; however, the design space in the context of ransomware is significantly more complex, but also offers more constructive directions.


\textbf{Acknowledgments:} We thank the anonymous reviewers for their comments. The
research activities of Jens Grossklags are supported by the German
Institute for Trust and Safety on the Internet (DIVSI). Aron Laszka's work
was supported in part by the National Science Foundation (CNS-1238959) and
the Air Force Research Laboratory (FA 8750-14-2-0180).

\bibliographystyle{plain}
\bibliography{references}

\appendix
\section{Proofs}
\label{sec:proof}

\subsection{Proof of Lemma~\ref{lem:paymentBestResponse}}
\label{app:proof1}
From Equation~\eqref{eq:payoffCompromisedOrg}, we have that the best-response strategy $p_i^*$ of organization $i$ is
\begin{align}
p_i^* \in& \argmax_{p \in \{0, 1\}} \left[ W_j - C_B \cdot b_i - \beta \left( \frac{F_j + (1 - p) \cdot L_j}{b_i} + T_j + p \cdot r \right) \right] \\
&= \argmax_{p \in \{0, 1\}} p \cdot \left( \frac{L_j}{b_i} - r \right) .
\end{align}
Clearly, $p_i^* = 1$ is a best response if and only if $\frac{L_j}{b_i} - r \geq 0$,
and $p_i^* = 0$ is a best response if and only if $\frac{L_j}{b_i} - r \leq 0$. $\qed$

\subsection{Proof of Lemma~\ref{lem:backupBestResponseNoAttack}}
\label{app:proof2}

From Equation~\eqref{eq:payoffSecureOrg}, we have that the best-response strategy $b_i^*$ of organization $i$ is
\begin{equation}
b_i^* \in \argmax_{b_i \in \Real_+} \left[  W_j - C_B \cdot b_i - \beta \frac{F_j}{b_i} \right] 
	.
\end{equation}
To find the maximizing $b_i^*$, we take the first derivative of the payoff, and set it equal to $0$:
\begin{align}
-C_B + \beta \frac{F_j}{{b_i^*}^2} = 0 \\
b_i^* = \pm \sqrt{\beta\frac{F_j}{C_B}} ,
\end{align}
Since $b_i \in \Real_+$, the only local optima is $b_i^* = \sqrt{\beta\frac{F_j}{C_B}}$.
Further, the payoff is a concave function of $b_i$ as the second derivative is negative, which means that this $b_i^*$ is the global optimum and, hence, a unique best response. $\qed$

\ifExtendedVersion		
\subsection{Proof of Lemma~\ref{lem:backupBestResponseAttack}}
\label{app:proof3}

	We can express the expected payoff of organization $i$ as
	\begin{align*}
	\E&\left[\calU_O\right] = \! \left(\! 1 - V_j(a_1, a_2) \!\right) \!
	\left[ W_j - C_B \cdot b_i - \beta \frac{F_j}{b_i} \right] \nonumber\\
	& + V_j(a_1, a_2)
	\left[ W_j - C_B \cdot b_i - \beta \left( \frac{F_j + (1 - p_i^*(b_i)) \cdot L_j}{b_i}+ T_j + p_i^*(b_i) \cdot r \right) \right] \\
	=& W_j - C_B \cdot b_i - \beta\frac{F_j}{b_i}  - \beta V_j(a_1, a_2) 
	\left[ \frac{(1 - p_i^*(b_i)) \cdot L_j}{b_i} + T_j + p_i^*(b_i) \cdot r \right] .
	\end{align*}
	Notice that we incorporated the second-stage best-response payment strategy $p_i^*(b_i)$, which is given by Lemma~\ref{lem:paymentBestResponse} and depends on the backup strategy $b_i$.
	
	We find the best response $b_i^*$ by dividing the search space into two regions, $b_i \leq \frac{L_j}{r}$ and $b_i \geq \frac{L_j}{r}$, and find the best strategies in both regions.
	
	\noindent\textbf{Case 1:}
	First, consider $b_i \leq \frac{L_j}{r}$, which implies that $p_i^*(b_i) = 1$.
	In this case, the best backup strategy $b_i^* \leq \frac{L_j}{r}$ is
	\begin{align}
	b_i^* \in& \argmax_{b_i \leq \frac{L_j}{r}} \left[W_j - C_B \cdot b_i - \beta\frac{F_j}{b_i} - \beta V_j(a_1, a_2)  
	\left(T_j + r \right)\right] \\
	&= \argmax_{b_i \leq \frac{L_j}{r}} \left[- C_B \cdot b_i - \beta\frac{F_j}{b_i} \right] .
	\end{align}
	Notice that this is equivalent to the maximization problem considered in Lemma~\ref{lem:backupBestResponseNoAttack}, whose solution was $\sqrt{\beta\frac{F_j}{C_B}}$.
	From the concavity of the objective, it follows that the best strategy $b_i^* \leq \frac{L_j}{r}$ is $\min\left\{\sqrt{\beta\frac{F_j}{C_B}}, \frac{L_j}{r}\right\}$.
	
	\noindent\textbf{Case 2:}
	Second, consider $b_i \geq \frac{L_j}{r}$.
	In this case, the best backup strategy $b_i^* \geq \frac{L_j}{r}$ can be expressed as
	\begin{align}
	b_i^* \in& \argmax_{b_i \geq \frac{L_j}{r}} \left[ W_j - C_B \cdot b_i - \beta\frac{F_j}{b_i}  - \beta V_j(a_1, a_2) \left( \frac{L_j}{b_i} + T_j \right) \right] \\
	&= \argmax_{b_i \geq \frac{L_j}{r}} \left[ - C_B \cdot b_i - \frac{\beta}{b_i} \left( F_j +  V_j(a_1, a_2)  L_j \right) \right] .
	\end{align}
	Note that the equation is correct even when $b_i = \frac{L_j}{r}$ (which implies $p_i^* = 1$),  since we can substitute $\frac{L_j}{b_i}$ for $r$.
	
	To find the best strategy $b_i^* \geq \frac{L_j}{r}$, we take the first derivative of the objective and set it equal to $0$:
	\begin{align}
	0 &= -C_B + \frac{\beta}{{b_i}^2} \left( F_j +  V_j(a_1, a_2)  L_j \right) \\
	b_i &= \pm \sqrt{\beta\frac{\left( F_j +V_j(a_1, a_2)  L_j \right)}{C_B}} .
	\end{align}
	Further, it is clear that the objective concave as the first derivative is strictly decreasing.
	Combining with the constraint $b_i \geq \frac{L_j}{r}$, this implies that the best strategy $b_i^* \geq \frac{L_j}{r}$ is $\max\left\{\sqrt{\beta\frac{\left( F_j +  V_j(a_1, a_2)  L_j \right)}{C_B}}, \frac{L_j}{r}\right\}$.
	
	Let's define $b_j^{\text{low}} = \sqrt{\beta\frac{F_j}{C_B}}$ and $b_j^{\text{high}} = \sqrt{\beta\frac{\left( F_j + V_j(a_1, a_2)  L_j \right)}{C_B}}$. Note that it is obvious that $b_j^{\text{low}} \leq b_j^{\text{high}}$. If we have $b_j^{\text{low}} > \frac{L_j}{r}$, the organization's payoff is increasing in Case 1 and for Case 2, it is increasing in $[\frac{L_j}{r}, b_j^{\text{high}}]$ and decreasing in $[b_j^{\text{high}}, \infty]$. Hence, the organization's payoff is maximized in $b_j^{\text{high}}$. In a similar way, the organization's payoff is maximized at $b_j^{\text{low}}$ when $b_j^{\text{high}} < \frac{L_j}{r}$. Otherwise, the organization's payoff is increasing in $[0,b_j^{\text{low}}]$, decreasing in $[b_j^{\text{low}}, \frac{L_j}{r}]$, increasing in $[\frac{L_j}{r}, b_j^{\text{high}}]$, and decreasing in $[b_j^{\text{high}}, \infty]$. Therefore, $b_i^* \in \left\{b_j^{\text{low}}, b_j^{\text{high}}\right\}$ (the one that gives the higher payoff or both if the resulting payoffs are equal). $\qed$
\fi

\subsection{Proof of Lemma~\ref{lem:ransomBestResponse}}
\label{app:proof4}

	The best-response ransom demand $r^*$ is
	\begin{align}
	r^* \in& \argmax_{r \in \Real_+} \left[\sum_j \sum_{i \in G_j} V_j(a_1, a_2) \cdot r \cdot p_i^*(r) \right] - C_A \cdot (a_1 + a_2) - C_D \\
	&= \argmax_{r \in \Real_+} \sum_j \sum_{i \in G_j} V_j(a_1, a_2) \cdot r \cdot 1_{\left\{r \leq \frac{L_j}{\hat{b}_j}\right\}}  \\
	&= \argmax_{r \in \Real_+} \sum_j \left|G_j\right| \cdot V_j(a_1, a_2) \cdot r \cdot 1_{\left\{r \leq \frac{L_j}{\hat{b}_j}\right\}} .
	\end{align}
	Clearly, the optimum is attained at either $\frac{L_1}{\hat{b}_1}$ or $\frac{L_2}{\hat{b}_2}$.
	Since we assumed that $\frac{L_1}{\hat{b}_1} \leq \frac{L_2}{\hat{b}_2}$,
	we have that $r = \frac{L_1}{\hat{b}_1}$ is a best response if and only if
	\begin{align}
	\left(\left|G_1\right|  V_1(a_1, a_2) + \left|G_2\right|  V_2(a_1, a_2)\right) \frac{L_1}{\hat{b}_1} \geq \left|G_2\right|  V_2(a_1, a_2) \frac{L_2}{\hat{b}_2} \\
	\left|G_1\right|  V_1(a_1, a_2) \frac{L_1}{\hat{b}_1} \geq \left|G_2\right|  V_2(a_1, a_2) \left( \frac{L_2}{\hat{b}_2} -  \frac{L_1}{\hat{b}_1} \right) .
	\end{align}
	Further, an analogous condition holds for $r = \frac{L_2}{\hat{b}_2}$ being a best response, which concludes our proof. $\qed$

\subsection{Proof of Lemma~\ref{lem:AttBREffort}}
\label{app:proof5}

Recall that the attacker's expected payoff is 
\begin{equation}
\E\left[\calU_A\right] = \left(\sum_j \sum_{i \in G_j} V_j(a_1, a_2)\cdot p_i \cdot r \right) - C_A \cdot (a_1 + a_2) - C_D \cdot 1_{\left\{a_1 > 0 \text{ or } a_2 > 0\right\}} .
\end{equation}
Consider that $a_1 + a_2 = a_{\text{sum}}$ and $r$ are given, and $a_{\text{sum}} > 0$.
Under these conditions, the attacker's best strategy is
\begin{align}
a_1^* \in& \argmax_{a_1 \geq 0} \left(\sum_j \sum_{i \in G_j} V_j(a_1, a_2)\cdot p_i^*(r) \cdot r \right) - C_A \cdot (a_1 + a_2) - C_D \\
&= \argmax_{a_1 \geq 0} \frac{a_1}{D + a_{\text{sum}}} |G_1| \cdot 1_{\left\{r \leq \frac{L_1}{\hat{b}_1}\right\}} + \frac{a_{\text{sum}} - a_1}{D + a_{\text{sum}}} |G_2| \cdot 1_{\left\{r \leq \frac{L_2}{\hat{b}_2}\right\}},
\end{align}
giving the non-negative payoff. The best strategy can be calculated readily.  $\qed$

\subsection{Proof of Proposition~\ref{pro:NE}}
\label{app:proofPro1}

Lemma~\ref{lem:AttBREffort} shows the attacker's best-response attack effort for fixed effort level, i.e., $a_{sum}$. In this Lemma, for example, $a_1^*=0$ and $a_2^*=a_{sum}$ is the attacker's best-response effort if $|G_1| \cdot 1_{\left\{r \leq \frac{L_1}{\hat{b}_1}\right\}} < |G_2| \cdot 1_{\left\{r \leq \frac{L_2}{\hat{b}_2}\right\}}$ and the resulting attacker's payoff is non-negative. 
According to Lemma~\ref{Lem:AttBR-Ransom}, the attacker's best-response ransom demand is either $\frac{L_1}{\hat{b}_1}$ or $\frac{L_2}{\hat{b}_2}$ and without loss of generality, we have assumed that  $\frac{L_1}{\hat{b}_1} \leq\frac{L_2}{\hat{b}_2}$.

For this case, the attacker's payoff is equal to:

\begin{equation}
\E\left[\calU_A\right] = \frac{a_{sum}}{D+a_{sum}} |G_2| \cdot r \cdot 1_{\left\{r \leq \frac{L_2}{\hat{b}_2}\right\}} - C_A \cdot a_{sum} - C_D.
\label{eq:AttPayoff}
\end{equation}

If the above equation is negative, i.e., $$r < \frac{ \left(D + a_{sum}\right) \left(C_A \cdot a_{sum} + C_D\right)}{a_{sum} \cdot |G_2| \cdot 1_{\left\{r \leq \frac{L_2}{\hat{b}_2}\right\}}},$$ the attacker's best-response effort is $a_1^* = a_2^*=0$. To satisfy the above condition, we replace $r$ with $\frac{L_2}{\hat{b}_2}$, which gives $$ \frac{L_2 \cdot a_{sum} \cdot |G_2| \cdot 1_{\left\{r \leq \frac{L_2}{\hat{b}_2}\right\}}}{L_2\left(D + a_{sum}\right) \left(C_A \cdot a_{sum} + C_D\right)} < \hat{b}^*_2.$$

Further, the defender's best-response backup strategy when there is no attack, i.e., $a_1^*=a_2^*=0$ is calculated based on Lemma~\ref{lem:BackupNoAtt}. By inserting the value of $\hat{b}^*_2$ from Lemma~\ref{lem:BackupNoAtt}, we can readily have the following: $$ \frac{L_2 \cdot a_{sum} \cdot |G_2| \cdot 1_{\left\{r \leq \frac{L_2}{\hat{b}_2}\right\}}}{L_2\left(D + a_{sum}\right) \left(C_A \cdot a_{sum} + C_D\right)} < \sqrt{\beta\frac{F_2}{C_B}}. $$

Another condition can be calculated similarly. $\qed$

\ifExtendedVersion
\section{Further Numerical Illustrations}
\label{app:Num}

\pgfplotsset{figF/.style={%
width=0.55\linewidth,
xlabel={$F_1$},
ylabel={$F_2$},
thick,
xtick={1, 5, 10},
ytick={1, 5, 10},
xlabel shift=-0.5em,
ylabel shift=-0.5em,
zlabel shift=-0.5em,
}}

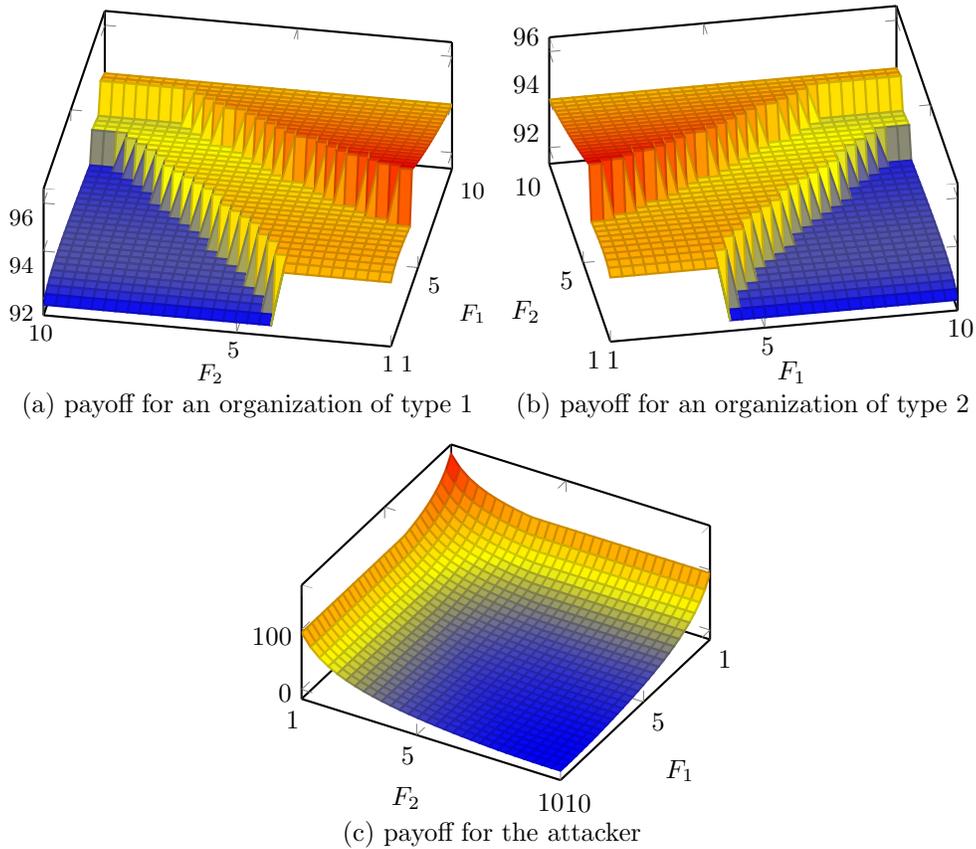
\begin{figure}[h!]
\centering
\renewcommand*{\arraystretch}{0}
\setlength{\tabcolsep}{1.5pt}
\begin{tabular}{cc}
\begin{tikzpicture}[font=\small]
\begin{axis}[
figF,
view={280}{55},
]
\addplot3 [surf, mesh/ordering=x varies, mesh/rows=30, mesh/cols=30] table[x=F1, y=F2, z=NE_org1_payoff, col sep=comma, comment chars={\%}] {results/F.csv}; 
\end{axis}
\end{tikzpicture}
&
\begin{tikzpicture}
\begin{axis}[
figF,
view={350}{55},
]
\addplot3 [surf, mesh/ordering=x varies, mesh/rows=30, mesh/cols=30] table[x=F1, y=F2, z=NE_org2_payoff, col sep=comma, comment chars={\%}] {results/F.csv}; 
\end{axis}
\end{tikzpicture}
\\
(a) payoff for an organization of type 1 & (b) payoff for an organization of type 2 \\[1.2em]
\multicolumn{2}{c}{
\begin{tikzpicture}
\begin{axis}[
figF,
view={120}{55},
]
\addplot3 [surf, mesh/ordering=x varies, mesh/rows=30, mesh/cols=30] table[x=F1, y=F2, z=NE_att_payoff, col sep=comma, comment chars={\%}] {results/F.csv}; 
\end{axis}
\end{tikzpicture}
}
\\
\multicolumn{2}{c}{(c) payoff for the attacker}
\end{tabular}
\caption{The expected payoff of individual organizations of (a) type 1 and (b) type 2 as well as (c) the attacker in equilibrium for various costs of random-failure data losses $F_1$ and~$F_2$.}
\label{fig:F}
\end{figure}

Figure~\ref{fig:F} shows the payoffs of individual organizations from two groups as well as the attacker, for various values of the costs $F_1$ and $F_2$ of random-failure data losses.
Figure~\ref{fig:F}(c) shows that the attacker's payoff \emph{decreases} as these costs increase, which is in contrast with the cost-payoff relationship shown by Figure~\ref{fig:L}(c).
The reason for this decrease is that organizations spend more on backups when the cost of random-failure data losses is higher, which in turn reduces the potential loss from ransomware attacks and, hence, the ransom that an attacker may demand.

We see more interesting phenomena in the organizations' payoffs.
For most combinations of cost values, as expected, the payoff of an organization decreases gradually as the random-failure data loss cost in its group increases.
However, for certain combinations of cost values (e.g., $F_1 = 4$ and $F_2 = 1$), increasing the cost of one group (e.g., group 1) \emph{increases} the payoff of organizations in that group (see, e.g., Figure~\ref{fig:F}(a)) but \emph{reduces} the payoff in the other group.
This surprising phenomenon is due to the incentives that increased costs of random-failure data losses provide for organizations to invest more in backups.
At certain tipping points, such as $F_1 = 4$ and $F_2 = 1$, even a small change in the costs can lead to a significantly different equilibrium, in which the focus of the attacker is shifted from one group to another.
This substitution effect is very similar to the one exhibited in Figure~\ref{fig:L}.

\fi

\end{document}